\newtheorem{Theorem}{Theorem}
\newtheorem{Definition}{Definition}
\newtheorem{Lemma}{Lemma}
\newtheorem{Remark}{Remark}
\newtheorem{Example}{Example}
\title{Deterministic Constructions of Binary Measurement Matrices with Various Sizes}
\name{Xin-Ji~Liu,~Shu-Tao~Xia,~Tao Dai\thanks{This research is supported in part by the 973 Program of China (No. 2012CB315803), the National Natural
Science Foundation of China (Nos. 61371078, 61375054), and the Research Fund for the Doctoral Program of Higher Education of China (No. 20130002110051). Corresponding author: Shu-Tao Xia, xiast@sz.tsinghua.edu.cn.}}
\address{Graduate School at Shenzhen, Tsinghua University}
\begin{document}
\ninept
\maketitle
\begin{abstract}
We introduce a general framework to deterministically construct binary measurement matrices for compressed sensing.
The proposed matrices are composed of (circulant) permutation submatrix blocks and zero submatrix blocks, thus making their hardware realization convenient and easy.
Firstly, using the famous Johnson bound for binary constant weight codes, we derive a new lower bound for the coherence of binary matrices with uniform column weights.
Afterwards, a large class of binary \emph{base matrices} with coherence asymptotically achieving this new bound are presented.
Finally, by choosing proper rows and columns from these base matrices, we construct the desired measurement matrices with various sizes and they show empirically comparable performance to that of the corresponding Gaussian matrices.
\end{abstract}
\begin{keywords}
Compressed sensing, deterministic measurement matrix, coherence, Johnson bound, Welch bound.
\end{keywords}
\section{Introduction}
Compressed sensing  (CS) \cite{candes2006a,donoho2006} is a novel sampling technique that samples sparse signals at a rate far lower than the Nyquist-Shannon rate.
Consider a \emph{$k$-sparse} signal $\textit{\textbf{x}}\in\mathbb{R}^{n}$ with at most $k$ nonzero entries,
if we make a linear sampling $\textit{\textbf{y}}=A\textbf{\textit{x}}$ of $\textit{\textbf{x}}$ with the \emph{measurement matrix} $A\in \mathbb{R}^{m\times n}$, where $m<n$,
then $\textit{\textbf{x}}$ could be recovered by solving an \emph{$\ell_1$-minimization} problem \cite{Candes2005} or by a greedy algorithm such as \emph{orthogonal matching pursuit} (OMP) \cite{Tropp2007}.
Actually, if $A$ satisfies the \emph{restricted isometry property} (RIP) \cite{Candes2005} of order $k$ with enough small $0<\delta_k^A<1$, signals with sparsity $O(k)$ can be exactly recovered by $\ell_1$-minimization or OMP \cite[pp. 26]{Foucart2013}, where $\delta_k^A$ denotes the \emph{restricted isometry constant} of $A$.

Many random matrices, such as the Gaussian matrices, have been proved to satisfy RIP of order $k$ \emph{with high probability} if $k\leq O(m/\log(n/k))$ \cite{Baraniuk2008}.
However, there is no guarantee that a specific realization of a random matrix works and some random matrices require lots of storage space.
In contrast, a deterministic matrix is often generated on the fly and RIP could be verified definitely.
Therefore, deterministic measurement matrices are often preferable in practice.

The \emph{coherence} $\mu(A)$ of a deterministic matrix $A$ is often exploited to prove RIP since $\delta_{k}^A \leq (k-1)\mu(A)$ \cite{Bourgain2011},
where
\begin{equation}\label{eq:cohdef}
  \mu(A)\triangleq \max_{1\leq i\neq j\leq n}{\frac{|\langle\textit{\textbf{a}}_{i}, \textit{\textbf{a}}_{j}\rangle|}{||\textit{\textbf{a}}_{i}||_{2}||\textit{\textbf{a}}_{j}||_{2}}},
\end{equation}
$\textit{\textbf{a}}_{1},
\textit{\textbf{a}}_{2}, \ldots, \textit{\textbf{a}}_{n}$ are the $n$ columns of $A$,
$\langle \textit{\textbf{a}}_i, \textit{\textbf{a}}_j\rangle \triangleq \textit{\textbf{a}}_i^T \textit{\textbf{a}}_j$ and for any $\textit{\textbf{z}}=(z_{1}, z_{2}, \ldots, z_{m})^T\in\mathbb{R}^{m}$ , $||\textit{\textbf{z}}||_{2}\triangleq \sqrt{\sum_{i=1}^m z_{i}^2}$.
Therefore, given $\mu(A)$, $A$ satisfies RIP of order
\begin{equation}\label{eq:riporder}
  k <1+\frac{1}{\mu(A)}.
\end{equation}

Recently, binary deterministic matrices have been introduced into compressed sensing due to their simplicity \cite{DeVore2007,Amini2011,Lu2012,Dimakis2012,Tehrani2013,Li2014a}.
For example, let $q$ be a prime power, DeVore proposed a class of binary (before column normalization) $q^2\times q^{r+1}$ matrices satisfying RIP of order $k<q/r+1$, where $1<r<q$ is a constant integer \cite{DeVore2007}.
By using the codewords of orthogonal optical codes as the columns of matrices, Amini \emph{et al.} constructed a class of binary measurement matrices \cite{Amini2011}.
In \cite{Li2014a}, the incidence matrices of several packing designs based on finite geometry are applied into compressed sensing.
These matrices have relatively low coherence and show empirically good performance in compressed sensing.

However, many of them are often based on Galois fields (GF), thus having restrictions to the numbers of rows\footnote{Generally, removing some columns from a matrix will not deteriorate its theoretical (such as coherence and RIP) and empirical performance.}.
Recently, utilizing the parallel structure of Euclidean geometry, we proposed a class of binary measurement matrices with a bit more flexible sizes \cite{sxxl2012}.
In this paper, we introduce more such matrices.
In particular, we focus on the binary matrix $H$ with a constant column weight.
By viewing the columns of $H$ as codewords of a \emph{constant weight code} \cite[pp. 523--531]{MacWilliams1979}, we derive a new lower bound for its coherence $\mu(H)$ with the help of the famous Johnson bound \cite{Johnson1962}, which improves the traditional Welch bound \cite{Welch1974}.
Then we present a subclass of binary (often quasi-cyclic) matrices asymptotically achieving this new bound and some examples from structural low-density parity-check (LDPC) codes \cite{Gallager1962} are given.
Based on these matrices, a general framework is proposed to obtain practical measurement matrices with various sizes.
Finally, simulations show that the proposed matrices perform comparably to, sometimes even better than, the corresponding Gaussian matrices.

\section{Main Results}
\subsection{Coherence of Binary Matrices}
In this part, we analyze the coherence of binary matrices which have uniform column weights $\gamma>1$.

Firstly, some preliminaries are presented.
For any matrix $H\in\{0,1\}^{m\times n}$, there is a \emph{Tanner graph} $G_{H}$ \cite{Tanner1981} corresponding to $H$.
$G_H$ is a bipartite graph comprised of $n$ variable nodes labelled by the elements of $I=\{1,2,\ldots,n\}$, $m$ check nodes labelled by the elements of $J=\{1,2,\ldots,m\}$, and the edge set $E\subseteq\{(i,j):i\in I, j\in J\}$, where there is an edge $(i,j)\in E$ if and only if $h_{ji} = 1$.
The \emph{girth} $g(H)$ of $H$ or $G_H$ is defined as the minimum length of cycles in $G_{H}$.
Girth is always an even number not smaller than 4.
$H$ is said to be $(\gamma,\rho)$-\emph{regular} if $H$ has uniform column weight $\gamma$ and uniform row weight $\rho$.

A binary matrix $H$ with uniform column weight $\gamma$ can be viewed as a collection of codewords (as columns of $H$) of certain binary constant weight codes.
An $(m, d, \gamma)$ constant weight code $\mathcal{C}$ is a set of binary vectors of length $m$, weight $\gamma$ and minimum distance $d$, where $d$ is always an even number.
Let $A(m,d,\gamma)$ be the largest number of codewords in any $(m,d,\gamma)$ constant weight codes, $A(m,d,\gamma)$ could be bounded by the famous Johnson bound \cite{Johnson1962}:
  \begin{equation}\label{eq:johnson2}
    A(m,2\delta,\gamma)\le \lfloor\frac{m}{\gamma}\lfloor\frac{m-1}{\gamma-1}\cdots\lfloor\frac{m-\gamma+\delta}{\delta}\rfloor\cdots\rfloor\rfloor,
  \end{equation}
where $\lfloor x\rfloor$ denotes the largest integer no larger than $x$.

Traditionally, the coherence of a matrix is bounded by the Welch bound \cite{Welch1974}:
\begin{equation}\label{eq:welch}
\mu(A)\geq\sqrt{\frac{n-m}{m(n-1)}}.
\end{equation}
The equality in (\ref{eq:welch}) achieves \emph{if and only if} $A$ is an \emph{equiangular tight frame} (ETF), i.e., A should satisfy the following 3 conditions: (a) the columns of $A$ have unit norm, (b) the rows of $A$ are orthogonal with equal norm, and (c) the inner products between any two different columns of $A$ are equal in modulus \cite{Bandeira2013}.
Therefore, for any binary matrix $H$ with uniform column weight $\gamma>1$, the rows of $H$ will not be orthogonal, thus the Welch bound (\ref{eq:welch}) could not be achieved.

In the following, we analyze the coherence of binary matrices by the Johnson bound.
Consider the binary $m\times n$ matrix $H$ with uniform column weight $\gamma>0$, suppose the maximum inner product of any two columns of $H$ is $\lambda>0$, then $H$ has coherence $\mu(H)=\frac{\lambda}{\gamma}$.
In particular, when $H$ has girth $g(H)>4$, any two distinct columns of $H$ have at most one pair of common `1' at the same row, i.e., $\lambda=1$, we have
\begin{equation}\label{eq:cohbing6}
  \mu(H)=\frac{1}{\gamma}.
\end{equation}

By viewing the column vectors of $H$ as the codewords of an $(m,d,\gamma)$ constant weight code $\mathcal{C}$, then $d=2\gamma-2\lambda$.
From the Johnson bound (\ref{eq:johnson2}), we have the following fact.
\begin{Lemma}\label{lem:johnsonanyg}
  For any binary matrix $H\in\{0,1\}^{m\times n}$ with uniform column weight $\gamma>1$, maximum inner product $0<\lambda<\gamma$ of any two distinct columns, $m$, $n$, $\gamma$ and $\lambda$ should satisfy:
  \begin{equation}\label{eq:mngljohnson2}
    n\le \lfloor\frac{m}{\gamma}\lfloor\frac{m-1}{\gamma-1}\cdots\lfloor\frac{m-\lambda}{\gamma-\lambda}\rfloor\cdots\rfloor\rfloor.
  \end{equation}
\end{Lemma}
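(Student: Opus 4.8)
The plan is to reduce Lemma~\ref{lem:johnsonanyg} directly to the Johnson bound~(\ref{eq:johnson2}) via the coding-theoretic interpretation already set up in the text. First I would observe that the $n$ columns of $H$ are binary vectors of length $m$ and constant weight $\gamma$, and that since the maximum inner product of any two distinct columns is $\lambda$, for any two distinct columns $\textit{\textbf{a}}_i,\textit{\textbf{a}}_j$ we have $|\mathrm{supp}(\textit{\textbf{a}}_i)\cap\mathrm{supp}(\textit{\textbf{a}}_j)|=\langle\textit{\textbf{a}}_i,\textit{\textbf{a}}_j\rangle\le\lambda$, hence the Hamming distance satisfies $d_H(\textit{\textbf{a}}_i,\textit{\textbf{a}}_j)=2\gamma-2\langle\textit{\textbf{a}}_i,\textit{\textbf{a}}_j\rangle\ge 2\gamma-2\lambda$. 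Thus the column set is an $(m,d,\gamma)$ constant weight code with minimum distance $d\ge 2\gamma-2\lambda$, so $n\le A(m,2\gamma-2\lambda,\gamma)$.

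Next I would apply~(\ref{eq:johnson2}) with $2\delta=2\gamma-2\lambda$, i.e. $\delta=\gamma-\lambda$. Substituting this value of $\delta$ into the nested-floor expression on the right of~(\ref{eq:johnson2}), the chain runs $\lfloor\frac{m}{\gamma}\lfloor\frac{m-1}{\gamma-1}\cdots\lfloor\frac{m-\gamma+\delta}{\delta}\rfloor\cdots\rfloor\rfloor$; since $\gamma-\delta=\lambda$, the innermost term $\frac{m-\gamma+\delta}{\delta}$ becomes $\frac{m-\lambda}{\gamma-\lambda}$, and the successive numerators and denominators decrease by $1$ from $\frac{m}{\gamma}$ down to $\frac{m-\lambda}{\gamma-\lambda}$, which is exactly the expression appearing in~(\ref{eq:mngljohnson2}). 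Combining with $n\le A(m,2\gamma-2\lambda,\gamma)$ gives the claimed inequality. One should also check the edge hypotheses: $\gamma>1$ and $0<\lambda<\gamma$ guarantee $1\le\delta=\gamma-\lambda\le\gamma-1$, so the nested expression is well-defined and nonempty (it has $\gamma-\delta+1=\lambda+1$ nested floors).

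The only genuinely delicate point — and the one I would treat carefully rather than wave through — is the bookkeeping that the index ranges in the Johnson-bound product match term-for-term after the substitution $\delta=\gamma-\lambda$, i.e. verifying that ``$\cdots$'' in~(\ref{eq:johnson2}) and in~(\ref{eq:mngljohnson2}) denote the same finite telescoping chain of fractions $\frac{m-t}{\gamma-t}$ for $t=0,1,\dots,\lambda$. Everything else is immediate: the passage from inner products to Hamming distances for constant-weight vectors, and the monotonicity fact that enlarging the guaranteed minimum distance only shrinks the maximum code size, so an upper bound on $A$ at distance $2\gamma-2\lambda$ applies to our code whose distance is at least that. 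No further obstacle is expected; the statement is essentially a repackaging of~(\ref{eq:johnson2}) in the language of column-weight-$\gamma$ matrices with prescribed coherence $\mu(H)=\lambda/\gamma$.
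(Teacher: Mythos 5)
Your argument is correct and is essentially the paper's own: the paper likewise views the columns of $H$ as an $(m,2\gamma-2\lambda,\gamma)$ constant weight code and invokes the Johnson bound~(\ref{eq:johnson2}) with $\delta=\gamma-\lambda$, stating the lemma as an immediate consequence. Your write-up simply makes explicit the distance computation and the index bookkeeping that the paper leaves implicit.
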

In particular, when $H$ has girth $g(H)>4$, we can obtain an explicit lower bound for the coherence of $H$.
\begin{Theorem}\label{th:john2}
  Let $H\in\{0,1\}^{m\times n}$ be a binary matrix with uniform column weight $\gamma>1$, girth $g(H)>4$ and coherence $\mu(H)\ne0$, then
  \begin{equation}\label{eq:cohg6}
    \mu(H)\ge\frac{2n}{n+\sqrt{n^2+4mn(m-1)}}.
  \end{equation}
\end{Theorem}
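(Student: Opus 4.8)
The plan is to reduce the statement to a quadratic inequality in the column weight $\gamma$ and then invert it. First I would pin down the maximum column inner product $\lambda$ exactly. Since $g(H)>4$, two distinct columns cannot share two common ones (that would force a $4$-cycle in $G_H$), so $\lambda\le 1$; on the other hand $\mu(H)\ne 0$ means some pair of columns has nonzero inner product, so $\lambda\ge 1$. Hence $\lambda=1$, and by~(\ref{eq:cohbing6}) we have $\mu(H)=\frac{1}{\gamma}$. So it suffices to produce an upper bound on $\gamma$.

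Next I would apply Lemma~\ref{lem:johnsonanyg} with $\lambda=1$: inequality~(\ref{eq:mngljohnson2}) collapses to $n\le\lfloor\frac{m}{\gamma}\lfloor\frac{m-1}{\gamma-1}\rfloor\rfloor$. Dropping the two floor functions only enlarges the right-hand side, so we still get the clean bound
\begin{equation*}
n\le \frac{m(m-1)}{\gamma(\gamma-1)},\qquad\text{equivalently}\qquad \gamma^2-\gamma-\frac{m(m-1)}{n}\le 0.
\end{equation*}

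Then I would solve this quadratic. Since $\gamma>0$, the inequality is equivalent to $\gamma\le\frac{1}{2}\bigl(1+\sqrt{1+\tfrac{4m(m-1)}{n}}\bigr)$. Taking reciprocals, $\mu(H)=\frac{1}{\gamma}\ge\frac{2}{1+\sqrt{1+4m(m-1)/n}}$. A routine rationalization finishes it: writing $1+\frac{4m(m-1)}{n}=\frac{n^2+4mn(m-1)}{n^2}$ so that $\sqrt{1+4m(m-1)/n}=\frac{1}{n}\sqrt{n^2+4mn(m-1)}$ turns the right-hand side into exactly $\frac{2n}{n+\sqrt{n^2+4mn(m-1)}}$, which is~(\ref{eq:cohg6}).

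There is no real obstacle in this argument; the only two points that deserve care are (i) the justification that $\lambda=1$ under the stated hypotheses, which is what makes both the identity $\mu(H)=1/\gamma$ and the $\lambda=1$ instance of Lemma~\ref{lem:johnsonanyg} legitimate, and (ii) checking that the successive relaxations — discarding the floors, then inverting — all go in the direction that preserves a valid \emph{lower} bound on $\mu(H)$ (removing the floors weakens the constraint on $n$, hence gives a weaker, still valid, upper bound on $\gamma$, hence a still valid lower bound on $1/\gamma$).
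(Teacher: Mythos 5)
Your proof is correct and follows the same route as the paper: establish $\lambda=1$ from the girth and nonzero-coherence hypotheses, invoke Lemma~\ref{lem:johnsonanyg} to get $n\le\frac{m(m-1)}{\gamma(\gamma-1)}$, and solve the resulting quadratic in $\gamma$ to bound $\mu(H)=1/\gamma$ from below. The paper's proof is just a terser version of the same argument (it leaves the quadratic manipulation implicit), so there is nothing to add.
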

\begin{proof}
  When $g(H)>4$ and $\mu(H)\ne 0$, $\lambda=1$.
  By (\ref{eq:mngljohnson2}), $n \le \lfloor\frac{m}{\gamma}\lfloor\frac{m-1}{\gamma-1}\rfloor\rfloor\le \frac{m(m-1)}{\gamma(\gamma-1)}$,
 (\ref{eq:cohg6}) follows since $\mu(H)=\frac{1}{\gamma}$.
\end{proof}

\begin{Remark}
By a simple deduction, it is easy to see that (\ref{eq:cohg6}) is always tighter than the Welch bound (\ref{eq:welch}) if $m<n$.
In addition, throughout this paper, we call the binary matrix with uniform column weight $\gamma>0$, girth $g>4$ and coherence $\mu\ne 0$ \emph{(asymptotically) optimal} if the coherence of this matrix (asymptotically) achieves the lower bound (\ref{eq:cohg6}).
\end{Remark}

\begin{Remark}
 Similar to the Johnson bound, (\ref{eq:cohg6}) could be achieved.
 For example, let $H$ be the point-line incidence matrix (rows of $H$ corresponding to the points and columns to the lines) of the Euclidean plane $EG(2,q)$, where $q$ is a prime power. $H$ is a $(q,q+1)$-regular matrix with the size $q^2\times (q^2+q)$, $g(H)=6$, and it is easy to verify that (\ref{eq:cohg6}) is achieved, see \cite{sxxl2012} for more details of $H$ and its application to compressed sensing.
\end{Remark}

\subsection{A Subclass of Asymptotically Optimal Binary Matrices in Terms of Coherence}
In this part, we show a subclass of binary matrices with coherence asymptotically achieving the lower bound (\ref{eq:cohg6}).
Later on, they will be used to obtain the desired measurement matrices with various sizes and empirically good performance.

Consider an ${s^2\times s^2}$ \emph{base matrix} as follows
\begin{equation}\label{eq:basem}
  H = [H_{i,j}], \quad 1\leq i,j\leq s,
\end{equation}
where $s>1$ and $H_{i,j}\in\{0,1\}^{s\times s}$ is either a permutation block or a zero block $\textbf{0}=\{0\}^{s\times s}$.
A \emph{permutation block} $B\in\{0,1\}^{s\times s}$ is a square matrix with each row and each column having exactly one element `1'.
If $B$ is also cyclic, then $B$ is called a \emph{circulant permutation block}.
Each $[H_{i,1}, H_{i,2},\ldots,H_{i,s}]$ (or $[H_{1,j}^T, H_{2,j}^T,\ldots,H_{s,j}^T]^T$) of $H$ is called a \emph{row-block} (or \emph{column-block}) of $H$.
$H$ satisfies the following two properties.
\begin{itemize}
  \item (P1) Every column-block of $H$ has exactly $t$ zero blocks, so does each row-block , i.e., $H$ is $(s-t,s-t)$-regular, where $0\le t\ll s$ is a small constant.
  \item (P2) The girth of $H$ is larger than 4, i.e., $g(H)>4$.
\end{itemize}

\begin{Remark}
  The $s^2\times s^2$ base matrix $H$ has coherence $\mu(H)=\frac{1}{s-t}$.
  According to Theorem \ref{th:john2}, the (nonzero) coherence of any $s^2\times s^2$ binary matrix with uniform column weight $\gamma>1$ and girth larger than 4 has the lower bound $\frac{2}{1+\sqrt{4s^2-3}}\rightarrow\frac{1}{s}$ if $s\rightarrow\infty$.
  Since $t\ge 0$ is a small constant, the coherence of the base matrix $H$ is asymptotically optimal.
\end{Remark}

In addition, for some submatrices of the base matrices, their coherences are also asymptotically optimal.
Let $A(\gamma,s,t)$ be a $\gamma s\times s^2$ submatrix of the base matrix $H$ by simply choosing the first $\gamma$ row-blocks of $H$, i.e.,
\begin{equation}\label{eq:gamm}
  A(\gamma,s,t)\triangleq [H_{i,j}], \;\;1\leq i\leq \gamma\le s,\;\;1\leq j\leq s.
\end{equation}

\begin{Remark}\label{rem:asymopt}
  Suppose $\gamma=cs$, where $0<c<1$ is a constant such that $cs$ is an integer.
  When $t=0$, $A(cs,s,0)$ is a $(cs,s)$-regular matrix with coherence $\mu(A(cs,s,0))=\frac{1}{cs}$.
  According to (\ref{eq:cohg6}), for any binary $cs^2\times s^2$ matrix with uniform column weight, girth larger than 4 and nonzero coherence, its coherence has the lower bound $\frac{1}{0.5+\sqrt{c^2s^2+0.25-c}}\rightarrow\frac{1}{cs}$
  if $s\rightarrow\infty$.
  Therefore, the submatrix $A(cs,s,0)$ is also asymptotically optimal in terms of coherence.
\end{Remark}

In the following, we review several examples of satisfactory base matrices from structured (often quasi-cyclic) LDPC codes.

\begin{Example}[\cite{Fossorier2004, Liu2013}]\label{con:add}
    Let $H=H(q,q)$, where $q$ is an odd prime and $H(q,q)$ is the binary matrix defined in (7) in \cite{Liu2013} with $r=q$.
    Then $H\in\{0,1\}^{q^2\times q^2}$ is a $(q,q)$-regular base matrix with $t=0$.
\end{Example}

\begin{Example}[\cite{Ge2006,Li2012a}]\label{con:bjq}
Let $H = H^{(1)}(q,q,0)$, where $q$ is a prime power and $H^{(1)}(q,q,0)$ is the parity-check matrix of a first class of B--J based LDPC code proposed in \cite[Section III.A]{Ge2006}.
Then $H\in\{0,1\}^{q^2\times q^2}$ is a $(q,q)$-regular base matrix with $t=0$.
\end{Example}


Let $GF(q) = \{\alpha^{-\infty}= 0, \alpha^0 = 1, \alpha, \ldots, \alpha^{q-2}\}$ be a Galois field with primitive element $\alpha$.
Establish a one-to-one \emph{$(q-1)$-fold correspondence} between the elements in $GF(q)$ and the matrices $P\in\{0,1\}^{(q-1)\times (q-1)}$ as follows:
\begin{itemize}
  \item $0$ is mapped to the zero block $\textbf{0}=\{0\}^{(q-1)\times (q-1)}$;
  \item $\alpha^i$ is mapped to a circulant permutation block $P_{q-1}^i$, where $0\leq i\leq q-2$,
  \begin{eqnarray}\label{eq:mpq}
        P_{q-1}\triangleq\left[
        \begin{array}{ccccc}
            0&1&0&\cdots&0\\
            0&0&1&\cdots&0\\
        \vdots&\vdots&\vdots&\ddots&\vdots\\
        0&0&0&\cdots&1\\
        1&0&0&\cdots&0\\
    \end{array}
    \right]_{(q-1)\times (q-1)},
\end{eqnarray}
$P_{q-1}^i$ denotes the $i$-th power of $P_{q-1}$ and $P_{q-1}^0\triangleq I_{q-1}$ is the identity matrix of order $q-1$.
\end{itemize}

In the following Examples \ref{con:rslatin} and \ref{con:latin}, we obtain the base matrix $H$ by firstly constructing a matrix $L\in GF(q)^{(q-1)\times(q-1)}$ based on the \emph{Latin square} and then replacing each element in $L$ with a circulant permutation block or a zero block $P\in\{0,1\}^{(q-1)\times (q-1)}$.

\begin{Definition}
  An Latin square of order $n$ is an $n\times n$ matrix with $n$ distinct symbols, each of which occurrs exactly once in each row and exactly once in each column.
\end{Definition}

\begin{Example}[\cite{Lan2007,Zeng2008,Zhang2010a}]\label{con:rslatin}
    Let $\beta$ be a nonzero element in $GF(q)$ and $L_{\rm RS}(\beta)$ be the following Reed-Solomon codes based (cyclic) Latin square of order $q-1$ over $GF(q)\setminus\{-\beta\}$:
\begin{eqnarray*}
L_{\rm RS}(\beta) =
\left[
\begin{array}{cccc}
1-\beta&\alpha-\beta&\ldots&\alpha^{q-2}-\beta\\
\alpha^{q-2}-\beta&1-\beta&\ldots&\alpha^{q-3}-\beta\\
\vdots&\vdots&\ddots&\vdots\\
\alpha-\beta&\alpha^{2}-\beta&\ldots&1-\beta
\end{array}
\right].
\end{eqnarray*}
Expand $L_{\rm RS}(\beta)$ by replacing each entry with a circulant permutation block or a zero block according to the $(q-1)$-fold correspondence, and then we could get a quasi-cyclic base matrix $H\in\{0,1\}^{(q-1)^2\times (q-1)^2}$ with $t=1$.
Note that no matter which nonzero $\beta$ is chosen, there is exactly one $0$ in each row and exactly one $0$ in each column of $L_{\rm RS}(\beta)$.
Therefore, the resulting $H$ is a $(q-2,q-2)$-regular matrix.
Finally, as there are $q-1$ nonzero elements $\beta\in GF(q)$, there will be $q-1$ such Latin squares $L_{\rm RS}(\beta)$ and thus $q-1$ such base matrices $H$.
\end{Example}
\begin{Example}[\cite{Zhang2010a}]\label{con:latin}
Let $\beta$ be any nonzero element in $GF(q)$ and $\bar L(\beta)=W$ be the Latin square of order $q$ over $GF(q)$ in \cite[Equation (11)]{Zhang2010a} with $\beta=\eta$.
Choose the following $(q-1)\times(q-1)$ submatrix $L(\beta)$ of $\bar L(\beta)$, $L(\beta)=$
\begin{eqnarray*}
\left[
\begin{array}{cccc}
\beta-1&\beta-\alpha&\ldots&\beta-\alpha^{q-2}\\
\alpha\beta-1&\alpha\beta-\alpha&\ldots&\alpha\beta-\alpha^{q-2}\\
\vdots&\vdots&\ddots&\vdots\\
\alpha^{q-2}\beta-1&\alpha^{q-2}\beta-\alpha&\ldots&\alpha^{q-2}\beta-\alpha^{q-2}
\end{array}
\right].
\end{eqnarray*}
Expand $L(\beta)$ according to the $(q-1)$-fold correspondence.
In this way, we obtain a quasi-cyclic and $(q-2,q-2)$-regular base matrix $H\in\{0,1\}^{(q-1)^2\times (q-1)^2}$ with $t=1$.
\end{Example}

\begin{Example}[\cite{Ge2006,Li2012a}]\label{con:bjq1}
  Let $q$ be a prime power. Let $H = H^{(2)}(q-1,q-1,0)$, where $H^{(2)}(q-1,q-1,0)$ is the parity-check matrix of a second class of B--J based LDPC code proposed in \cite[Section III.B]{Ge2006}.
  Then $H\in\{0,1\}^{(q-1)^2\times (q-1)^2}$ is a quasi-cyclic and $(q-2,q-2)$-regular base matrix with $t=1$.
\end{Example}



\subsection{General Framework of Matrix Constructions}
In this part, we give the general framework to deterministically construct binary measurement matrices, see Algorithm \ref{alg:construction}.
Note that in the second step of Algorithm \ref{alg:construction}, we choose the $s^2\times s^2$ base matrix $H$ in such a way as to make the resulting $A$ have the smallest coherence.
In practice, we often require $m\gg\sqrt{n}$, thus the outputted matrix will have small coherence and empirically good performance.
For example, $m$ scales linearly with $n$, i.e., $m=cn$, where $0<c<1$ is a constant.
See the following Theorem \ref{th:coh} for a formalized explanation.
\begin{algorithm}[htbp]
\textbf{Input}: Matrix size $m$ and $n$.\\
\textbf{Output}: A binary measurement matrix $A\in \{0,1\}^{m\times n}$.\\
\textbf{Steps}:

(1) Base matrix construction: construct several classes of $\bar s^2\times \bar s^2$ base matrices satisfying (P1) and (P2) with $t\ll \bar{s}$.

(2) Base matrix selection: choose an $s^2\times s^2$ matrix $H$ among these base matrices such that $s\ge \sqrt{n}$ and $(m/s-t)$ is as large as possible.

(3) Extra elements deletion: remove the last $s^2-m$ rows and the last $s^2-n$ columns of $H$ and output the resulting submatrix as $A$.

\caption{Deterministic Construction of Binary Measurement Matrices with Various Sizes} \label{alg:construction}
\end{algorithm}

\begin{Theorem}\label{th:coh}
  For any measurement matrix $A\in \{0,1\}^{m\times n}$ constructed by Algorithm \ref{alg:construction}, we have
  \begin{equation}\label{eq:coharbm}
    \mu(A)\leq \frac{1}{\gamma-t},
  \end{equation}
  where $\gamma = \left\lfloor\frac{m}{s}\right\rfloor$,
$0\leq t\ll s$ is a fixed integer.
\end{Theorem}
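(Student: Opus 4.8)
The plan is to read the two ingredients that determine $\mu(A)$ --- the $\ell_2$-norms of the columns of $A$ and the inner products between distinct columns --- straight off the block structure of the selected base matrix together with its two defining properties (P1) and (P2).

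First I would fix notation for the output of Algorithm \ref{alg:construction}. By Step (2) the selected base matrix $H$ has order $s^2$ with $s\ge\sqrt n$; since $m<n\le s^2$, Step (3) is well defined and $A$ is precisely the top-left $m\times n$ submatrix of $H$, so each column $\textit{\textbf{a}}_i$ of $A$ is the restriction of a column of $H$ to its first $m$ coordinates, and $\gamma=\lfloor m/s\rfloor\le s$. I would then bound the column weights from below. Because $\gamma s\le m$, the first $\gamma$ row-blocks of $H$ are retained in full. A fixed column of $H$ lies in a single column-block $j$, and in each of the first $\gamma$ row-blocks it meets exactly one of the blocks $H_{1,j},\ldots,H_{\gamma,j}$; by (P1) at most $t$ of the $s$ blocks of column-block $j$ are zero blocks, so at least $\gamma-t$ of $H_{1,j},\ldots,H_{\gamma,j}$ are permutation blocks, each of which contributes exactly one entry $1$ to that column. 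Hence every column of $A$ has Hamming weight at least $\gamma-t$, i.e.\ $\|\textit{\textbf{a}}_i\|_2\ge\sqrt{\gamma-t}$.

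Next I would bound the off-diagonal inner products. By (P2) we have $g(H)>4$, so no two variable nodes of the Tanner graph $G_H$ share two or more check neighbours --- such a configuration would close a $4$-cycle --- which is the same as saying that any two distinct columns of $H$ have at most one common coordinate equal to $1$. Deleting rows cannot create new common $1$'s, so $|\langle\textit{\textbf{a}}_i,\textit{\textbf{a}}_j\rangle|\le 1$ for all $i\ne j$. Plugging the two estimates into the definition (\ref{eq:cohdef}) gives $|\langle\textit{\textbf{a}}_i,\textit{\textbf{a}}_j\rangle|/(\|\textit{\textbf{a}}_i\|_2\|\textit{\textbf{a}}_j\|_2)\le 1/(\gamma-t)$ for every $i\ne j$, and taking the maximum yields (\ref{eq:coharbm}); if all these inner products vanish then $\mu(A)=0$ and the inequality is trivial.

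I do not expect a genuine obstacle in this argument; the one place demanding care is the lower bound on the column weights --- one must be sure that chopping off the last $s^2-m$ rows cannot push a column of $A$ below weight $\gamma-t$. This is exactly why Step (2) keeps $s$ small enough that $\gamma=\lfloor m/s\rfloor$ complete row-blocks survive the truncation, and why (P1) is phrased per column-block (so that each column individually misses only $t$ blocks) rather than as a global count.
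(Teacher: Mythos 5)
Your proposal is correct and follows the same route as the paper's own (much terser) proof: lower-bound the column weights by $\gamma-t$ via (P1), upper-bound the pairwise inner products by $1$ via (P2), and combine through the definition (\ref{eq:cohdef}). The extra care you take in verifying that the first $\gamma$ row-blocks survive the truncation (since $\gamma s\le m$) is a detail the paper leaves implicit, but it is the same argument.
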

\begin{proof}
    According to (P1), the minimum possible column weight of $A$ is $\gamma-t$.
    From (P2), the inner product of any two columns of $A$ is at most 1.
    By (\ref{eq:cohdef}), (\ref{eq:coharbm}) follows directly.
\end{proof}
\begin{Remark}
  As stated in Remark \ref{rem:asymopt}, when $\gamma=\frac{m}{s}=cs$, $n=s^2$ and $t=0$, the binary matrix $A(cs,s,0)$ outputted by Algorithm \ref{alg:construction} is asymptotically optimal in terms of coherence.
  In other cases, the structure (and thus the coherence) of the resulting matrix $A$ with $m=cn$ and $n=s^2$ is very close to that of $A(cs,s,0)$.
  Moreover, removing columns of a measurement matrix will not deteriorate its empirical performance.
  Therefore, it is reasonable to conjecture that the measurement matrices obtained by Algorithm \ref{alg:construction} will often perform well in practice and this will be verified by the following experimental results.
\end{Remark}

\section{Experimental Results}\label{sec:simu}
In the following simulations, for each measurement matrix $A$ and each $k$-sparse signal $\textit{\textbf{x}}$, we conduct an experiment using $M=1000$ Monte Carlo trials.
In the $i$-th trial, a relative recovery error $e_i = ||\textit{\textbf{x}}^*-\textit{\textbf{x}}||_2/||\textit{\textbf{x}}||_2$ is computed, where $\textit{\textbf{x}}^*$ denotes the recovered signal.
If $e_i\leq 0.001$, we declare this recovery to be ``perfect''.
Finally, an average percentage of perfect recovery over the $M$ trials is obtained and shown as a point in the figures.

At first, we give an example to show the empirical effectiveness for the base matrix selecting strategy in the second step of Algorithm \ref{alg:construction}.
Suppose only one class of base matrices are constructed in the first step, such as the base matrices in Example \ref{con:rslatin}, and now we want to construct a $100\times 300$ binary measurement matrix.
Since $\sqrt{300}=17.32$, we can set $q$ to be $19, 23$, or even larger prime power.
The OMP recovery performance of the desired measurement matrices obtained by setting $q=19$, $q=23$ and the Gaussian matrix (`Rnd') with the same size are shown in Fig. \ref{fig:schoose}.
\begin{figure}[htbp]
   \centering
   \includegraphics[width=0.45\textwidth]{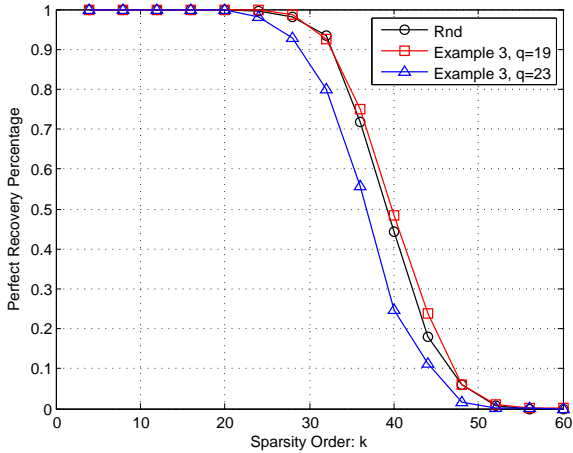}
   \caption{Empirical performance of the $100\times 300$ measurement matrices obtained by Example \ref{con:rslatin} by setting $q=19$ and $q=23$ and the corresponding Gaussian random matrix under OMP recovery.}
   \label{fig:schoose}
\end{figure}
It is clear that the matrix based on Example \ref{con:rslatin} with $q=19$ is better than that with $q=23$, which agrees with the base matrix choosing strategy in the second step of Algorithm \ref{alg:construction}.

In the following, we consider several binary measurement matrices based on the base matrices in Examples \ref{con:add}--\ref{con:bjq1}, see Fig. \ref{fig:comparesmall} for the empirical performance of these matrices with small sizes and Fig. \ref{fig:comparelarge} for that of matrices with larger sizes.

Let $q=31$ in Example \ref{con:add}, $q=32$ in Examples \ref{con:bjq}--\ref{con:bjq1},
$\beta=1$ in Examples \ref{con:rslatin} and \ref{con:latin}.
For each Example \ref{con:add}--\ref{con:bjq1}, construct $3$ measurement matrices with sizes $190\times940$, $225\times950$, and $260\times960$ by removing the last extra rows and columns from the $5$ different base matrices.
See Fig. \ref{fig:comparesmall} for their empirical performance and the corresponding Gaussian matrices.
\begin{figure}[htbp]
   \centering
   \includegraphics[width=0.45\textwidth]{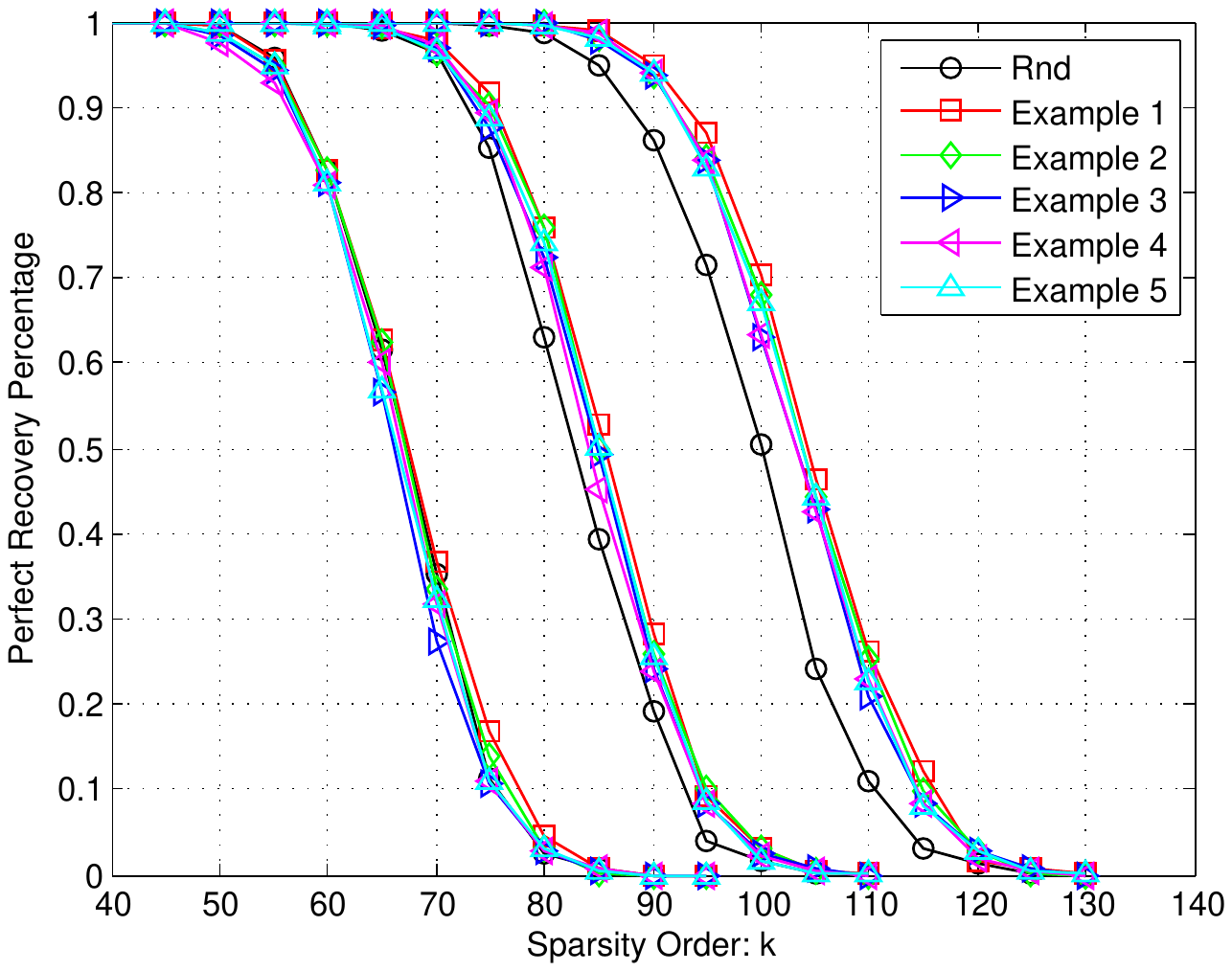}
   \caption{Empirical performance of the proposed measurement matrices and the corresponding Gaussian random matrices with sizes $190\times940$, $225\times950$, and $260\times960$ (the three curve bundles from left to right, respectively) under OMP recovery.}
   \label{fig:comparesmall}
\end{figure}

Let $q=61$ in Example \ref{con:add} and $q=64$ in Example \ref{con:bjq}--\ref{con:bjq1}.
For each Example \ref{con:add}--\ref{con:bjq1}, construct $3$ matrices with sizes $450\times3500$, $500\times3600$, and $550\times3700$.
See Fig. \ref{fig:comparelarge} for their empirical performance.
\begin{figure}[htbp]
   \centering
   \includegraphics[width=0.45\textwidth]{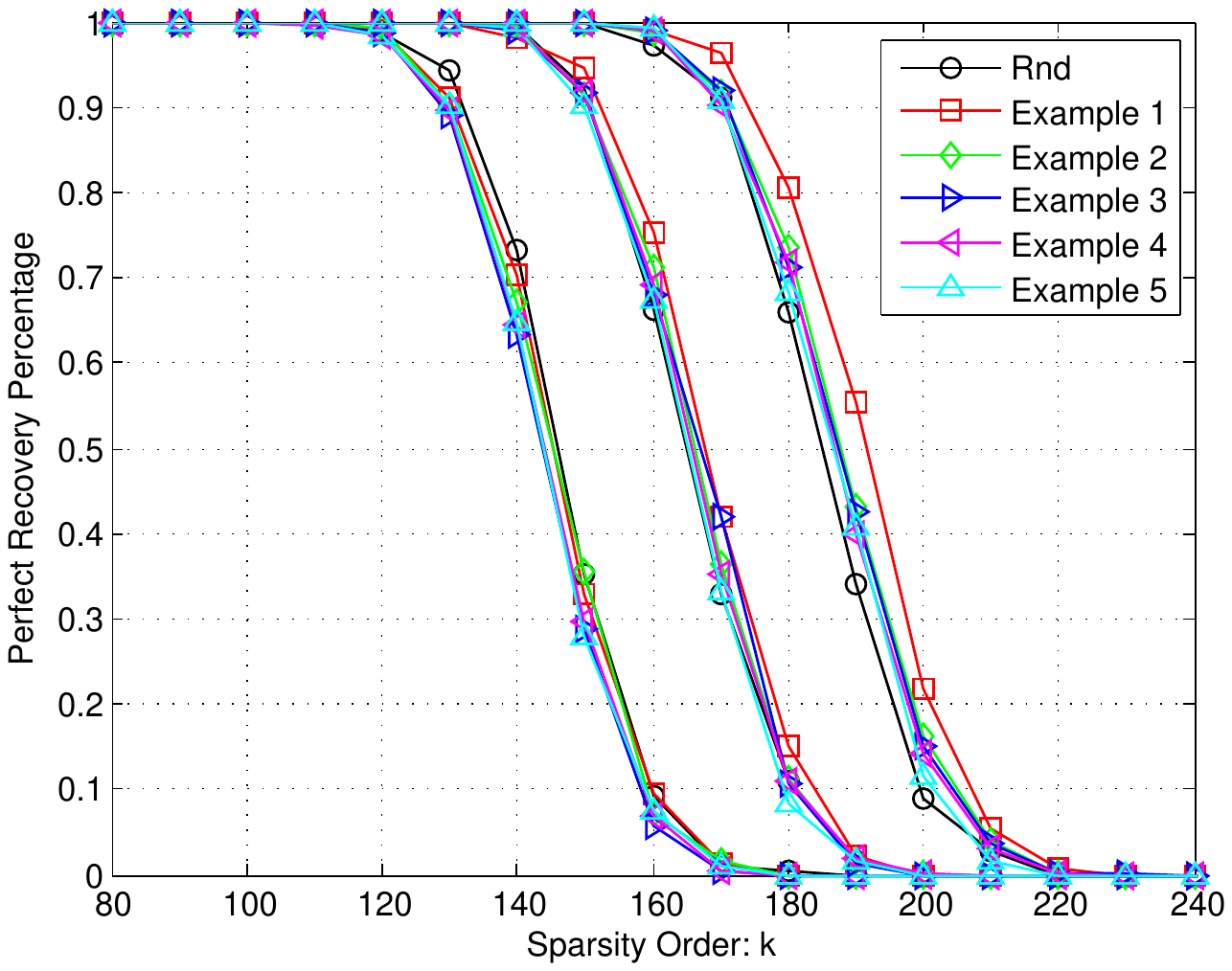}
   \caption{Empirical performance of the proposed measurement matrices and the corresponding Gaussian matrices with sizes $450\times3500$, $500\times3600$, and $550\times3700$ (the three curve bundles from left to right, respectively) under OMP recovery.}
   \label{fig:comparelarge}
\end{figure}

In Figs. \ref{fig:comparesmall} and \ref{fig:comparelarge}, all of the proposed matrices perform as well as, sometimes even better than, the corresponding Gaussian matrices.
In addition, it is easy to see that the matrices from Example \ref{con:add} often perform slightly better than those from Example \ref{con:bjq}--\ref{con:bjq1} due to the specific matrix sizes.
Simple computations on the upper bounds of coherence (according to Theorem \ref{th:coh}) show that each coherence upper bound of the six matrices obtained by Example \ref{con:add} is smaller than (or sometimes equal to) that of other examples.
This also agrees with the base matrix selecting strategy in the second step of Algorithm \ref{alg:construction}.
\section{Conclusions and Discussions}
This paper has introduced a general framework to deterministically construct binary measurement matrices with various sizes and empirically good performance.
In particular, some of them are also shown to be asymptotically optimal according to a new lower bound of coherence derived with the help of the famous Johnson bound.
Moreover, these matrices are binary, sparse, and mostly quasi-cyclic, which will benefit the hardware implementation.

This paper mainly focuses on binary matrices with girth larger than 4.
However, as has been indicated by Lu \cite{Lu2012}, some empirically even better binary matrices lie in the region of girth $g=4$.
In addition, a $(q^2+1)\times q(q^2+1)$ binary measurement matrix with uniform column weight $\gamma=q+1$ and $\lambda=2$ (thus girth $g=4$) has been proposed in \cite{Li2014a}.
It is easy to verify that this matrix achieves the Johnson bound and Equation (\ref{eq:mngljohnson2}) in this paper and they are also shown to perform empirically well in \cite{Li2014a}.
As a result, it will be interesting to carry out some theoretical analysis explicitly on binary matrices with $g=4$ and construct more such (asymptotically) optimal matrices which may show perhaps better performance in practice.
\newpage
\bibliographystyle{IEEEtran}
\bibliography{IEEEabrv,OptBin}
\end{document}